  \providecommand\BibTeX{{%
    \normalfont B\kern-0.5em{\scshape i\kern-0.25em b}\kern-0.8em\TeX}}}
\useunder{\uline}{\ul}{}
\begin{document}
\fancyhead{}

\title{Cross-Batch Negative Sampling for Training \\ Two-Tower Recommenders}


\author{Jinpeng Wang}
\email{wjp20@mails.tsinghua.edu.cn}
\affiliation{
  \institution{Tsinghua University}
  \country{China}
}

\author{Jieming Zhu}
\email{jamie.zhu@huawei.com}
\affiliation{
  \institution{Huawei Noah's Ark Lab}
  \country{China}
}

\author{Xiuqiang He}
\email{hexiuqiang1@huawei.com}
\affiliation{
  \institution{Huawei Noah's Ark Lab}
  \country{China}
}

\renewcommand{\shortauthors}{Jinpeng Wang, Jieming Zhu and Xiuqiang He}

\newcommand{\ie}{\emph{i.e.},~}
\newcommand{\eg}{\emph{e.g.}~}
\newcommand{\wrt}{\emph{w.r.t.}~}
\renewcommand{\paragraph}[1]{\medskip\noindent\textbf{#1.~}}
\newcommand{\todo}{\color{red}{\textbf{TODO.}}~}

\newcommand{\xiaok}[1]{\left(#1\right)}
\newcommand{\zhongk}[1]{\left[#1\right]}
\newcommand{\dak}[1]{\left\{#1\right\}}
\newcommand{\jiaok}[1]{\left<#1\right>}
\newcommand{\shuk}[1]{\left\lVert#1\right\rVert}
\newcommand{\shuks}[1]{\left\lVert#1\right\rVert^2}
\newcommand{\shangk}[1]{\left\lceil #1 \right\rceil}
\newcommand{\xiak}[1]{\left\lfloor #1 \right\rfloor}

\newcommand{\argmax}[1]{{\mathop{\arg\mathrm{max}}_{#1}\,}}
\newcommand{\argmin}[1]{{\mathop{\arg\mathrm{min}}_{#1}\,}}
\newcommand{\pfrac}[2]{\frac{\partial #1}{\partial #2}}
\newcommand{\prob}[2]{p\xiaok{#1 \mid #2}}

\newcommand{\T}{\top}
\newcommand{\dif}{\mathop{}\!\mathrm{d}}
\newcommand{\biset}[1]{\{0,1\}^{#1}}
\newcommand{\ReLU}{\mathrm{ReLU}}
\newcommand{\relu}{\mathrm{ReLU}}
\newcommand{\sign}{\mathrm{sign}}
\newcommand\softmax{\mathrm{softmax}}
\newcommand\KL{D_{\mathrm{KL}}}
\newcommand\Var{\mathrm{Var}}
\newcommand\Cov{\mathrm{Cov}}
\newcommand{\Tr}{\mathrm{Tr}}
\newcommand{\tr}{\mathrm{tr}}
\newcommand{\dist}{\mathrm{dist}}
\newcommand{\concat}{\mathrm{concat}}
\newcommand{\mean}{\mathrm{mean}}
\newcommand{\diag}{\mathrm{diag}}
\newcommand{\cov}{\mathrm{cov}}

\newcommand{\range}[1]{{0,1,\cdots,#1}} 
\newcommand{\Range}[1]{{1,2,\cdots,#1}} 
\newcommand{\opseq}[3]{{#1_1 #3 #1_2 #3 \cdots #3 #1_{#2}}}
\newcommand{\seq}[2]{\opseq{#1}{#2}{,}}
\newcommand{\xseq}[2]{\opseq{#1}{#2}{\times}}

\newcommand{\bma}{\bm{a}}
\newcommand{\bmb}{\bm{b}}
\newcommand{\bmc}{\bm{c}}
\newcommand{\bmd}{\bm{d}}
\newcommand{\bme}{\bm{e}}
\newcommand{\bmf}{\bm{f}}
\newcommand{\bmg}{\bm{g}}
\newcommand{\bmh}{\bm{h}}
\newcommand{\bmi}{\bm{i}}
\newcommand{\bmj}{\bm{j}}
\newcommand{\bmk}{\bm{k}}
\newcommand{\bml}{\bm{l}}
\newcommand{\bmm}{\bm{m}}
\newcommand{\bmn}{\bm{n}}
\newcommand{\bmo}{\bm{o}}
\newcommand{\bmp}{\bm{p}}
\newcommand{\bmq}{\bm{q}}
\newcommand{\bmr}{\bm{r}}
\newcommand{\bms}{\bm{s}}
\newcommand{\bmt}{\bm{t}}
\newcommand{\bmu}{\bm{u}}
\newcommand{\bmv}{\bm{v}}
\newcommand{\bmw}{\bm{w}}
\newcommand{\bmx}{\bm{x}}
\newcommand{\bmy}{\bm{y}}
\newcommand{\bmz}{\bm{z}}
\newcommand{\bmzero}{\bm{0}}
\newcommand{\bmone}{\bm{1}}
\newcommand{\bmalpha}{\bm{\alpha}}
\newcommand{\bmbeta}{\bm{\beta}}
\newcommand{\bmgamma}{\bm{\gamma}}
\newcommand{\bmdelta}{\bm{\delta}}
\newcommand{\bmepsilon}{\bm{\epsilon}}
\newcommand{\bmtheta}{\bm{\theta}}
\newcommand{\bmiota}{\bm{\iota}}
\newcommand{\bmkappa}{\bm{\kappa}}
\newcommand{\bmlambda}{\bm{\lambda}}
\newcommand{\bmmu}{\bm{\mu}}
\newcommand{\bmnu}{\bm{\nu}}
\newcommand{\bmxi}{\bm{\xi}}
\newcommand{\bmpi}{\bm{\pi}}
\newcommand{\bmrho}{\bm{\rho}}
\newcommand{\bmsigma}{\bm{\sigma}}
\newcommand{\bmtau}{\bm{\tau}}
\newcommand{\bmupsilon}{\bm{\upsilon}}
\newcommand{\bmphi}{\bm{\phi}}
\newcommand{\bmchi}{\bm{\chi}}
\newcommand{\bmpsi}{\bm{\psi}}
\newcommand{\bmomega}{\bm{\omega}}
\newcommand{\bmA}{\bm{A}}
\newcommand{\bmB}{\bm{B}}
\newcommand{\bmC}{\bm{C}}
\newcommand{\bmD}{\bm{D}}
\newcommand{\bmE}{\bm{E}}
\newcommand{\bmF}{\bm{F}}
\newcommand{\bmG}{\bm{G}}
\newcommand{\bmH}{\bm{H}}
\newcommand{\bmI}{\bm{I}}
\newcommand{\bmJ}{\bm{J}}
\newcommand{\bmK}{\bm{K}}
\newcommand{\bmL}{\bm{L}}
\newcommand{\bmM}{\bm{M}}
\newcommand{\bmN}{\bm{N}}
\newcommand{\bmO}{\bm{O}}
\newcommand{\bmP}{\bm{P}}
\newcommand{\bmQ}{\bm{Q}}
\newcommand{\bmR}{\bm{R}}
\newcommand{\bmS}{\bm{S}}
\newcommand{\bmT}{\bm{T}}
\newcommand{\bmU}{\bm{U}}
\newcommand{\bmV}{\bm{V}}
\newcommand{\bmW}{\bm{W}}
\newcommand{\bmX}{\bm{X}}
\newcommand{\bmY}{\bm{Y}}
\newcommand{\bmZ}{\bm{Z}}
\newcommand{\bmGamma}{\bm{\Gamma}}
\newcommand{\bmDelta}{\bm{\Delta}}
\newcommand{\bmTheta}{\bm{\Theta}}
\newcommand{\bmLambda}{\bm{\Lambda}}
\newcommand{\bmXi}{\bm{\Xi}}
\newcommand{\bmPi}{\bm{\Pi}}
\newcommand{\bmSigma}{\bm{\Sigma}}
\newcommand{\bmUpsilon}{\bm{\Upsilon}}
\newcommand{\bmPhi}{\bm{\Phi}}
\newcommand{\bmPsi}{\bm{\Psi}}
\newcommand{\bmOmega}{\bm{\Omega}}

\newcommand{\calA}{\mathcal{A}}
\newcommand{\calB}{\mathcal{B}}
\newcommand{\calC}{\mathcal{C}}
\newcommand{\calD}{\mathcal{D}}
\newcommand{\calE}{\mathcal{E}}
\newcommand{\calF}{\mathcal{F}}
\newcommand{\calG}{\mathcal{G}}
\newcommand{\calH}{\mathcal{H}}
\newcommand{\calI}{\mathcal{I}}
\newcommand{\calJ}{\mathcal{J}}
\newcommand{\calK}{\mathcal{K}}
\newcommand{\calL}{\mathcal{L}}
\newcommand{\calM}{\mathcal{M}}
\newcommand{\calN}{\mathcal{N}}
\newcommand{\calO}{\mathcal{O}}
\newcommand{\calP}{\mathcal{P}}
\newcommand{\calQ}{\mathcal{Q}}
\newcommand{\calR}{\mathcal{R}}
\newcommand{\calS}{\mathcal{S}}
\newcommand{\calT}{\mathcal{T}}
\newcommand{\calU}{\mathcal{U}}
\newcommand{\calV}{\mathcal{V}}
\newcommand{\calW}{\mathcal{W}}
\newcommand{\calX}{\mathcal{X}}
\newcommand{\calY}{\mathcal{Y}}
\newcommand{\calZ}{\mathcal{Z}}

\newcommand{\bbC}{\mathbb{C}}
\newcommand{\bbE}{\mathbb{E}}
\newcommand{\bbN}{\mathbb{N}}
\newcommand{\bbQ}{\mathbb{Q}}
\newcommand{\bbR}{\mathbb{R}}
\newcommand{\bbZ}{\mathbb{Z}}

\newcommand{\tabincell}[2]{\begin{tabular}{@{}#1@{}}#2\end{tabular}}

\def \CBNS {CBNS}
\begin{abstract}
  The two-tower architecture has been widely applied for learning item and user representations, which is important for large-scale recommender systems.
  Many two-tower models are trained using various in-batch negative sampling strategies, where the effects of such strategies inherently rely on the size of mini-batches. 
  However, training two-tower models with a large batch size is inefficient, as it demands a large volume of memory for item and user contents and consumes a lot of time for feature encoding. 
  Interestingly, we find that neural encoders can output relatively stable features for the same input after warming up in the training process. 
  Based on such facts, we propose a simple yet effective sampling strategy called \textbf{C}ross-\textbf{B}atch \textbf{N}egative \textbf{S}ampling (\textbf{CBNS}), which takes advantage of the encoded item embeddings from recent mini-batches to boost the model training.
  Both theoretical analysis and empirical evaluations demonstrate the effectiveness and the efficiency of CBNS.
\end{abstract}

\begin{CCSXML}
<ccs2012>
<concept>
<concept_id>10002951.10003317.10003347.10003350</concept_id>
<concept_desc>Information systems~Recommender systems</concept_desc>
<concept_significance>500</concept_significance>
</concept>
</ccs2012>
\end{CCSXML}

\ccsdesc[500]{Information systems~Recommender systems}

\keywords{Recommender systems; information retrieval; neural networks}


\maketitle

\section{Introduction}
The recommender systems play a key role in recommending items of personal interest to users from a vast candidate pool, where learning high-quality representations for users and items is the core challenge. 
With the rapid development of deep learning, many approaches~\cite{chen2017attentive,wei2019mmgcn,ge2020graph} are proposed to incorporate side information (\eg the content features) into modeling user-item interactions, which effectively alleviate the cold-start problem~\cite{schein2002methods} and improve the generalization ability to new items. 
Recently, neural recommenders with two-tower architecture, \ie dual neural encoder for separating item and user representations, have been shown to outperform traditional matrix factorization (MF)-based methods~\cite{hu2008collaborative, FM, SVDFeature} and become popular in large-scale systems~\cite{youtubeDNN, MIND, comirec}.

Training a recommender over the large-scale item corpus is typically translated into an extreme multi-class classification task using sampled softmax loss~\cite{IS,mikolov2013distributed,youtubeDNN}, in which the negative sampling counts for much. 
In the batch training for two-tower models, using in-batch negatives~\cite{yih2011learning,gillick2019learning}, \ie taking positive items of other users in the same mini-batch as negative items, has become a general recipe to save the computational cost of user and item encoders and improve training efficiency.
There have been some recent efforts to bridge the sampling gaps between local (\ie in-batch) and global (\ie the whole data space) distributions~\cite{corr-sfx,MNS}, and it is not new to adopt some off-the-shelf hard-mining techniques~\cite{wu2017sampling,samwalker,denseQA} to make better use of informative negatives in a batch.

\begin{figure}[t]
  \centering
  \includegraphics[width=\linewidth]{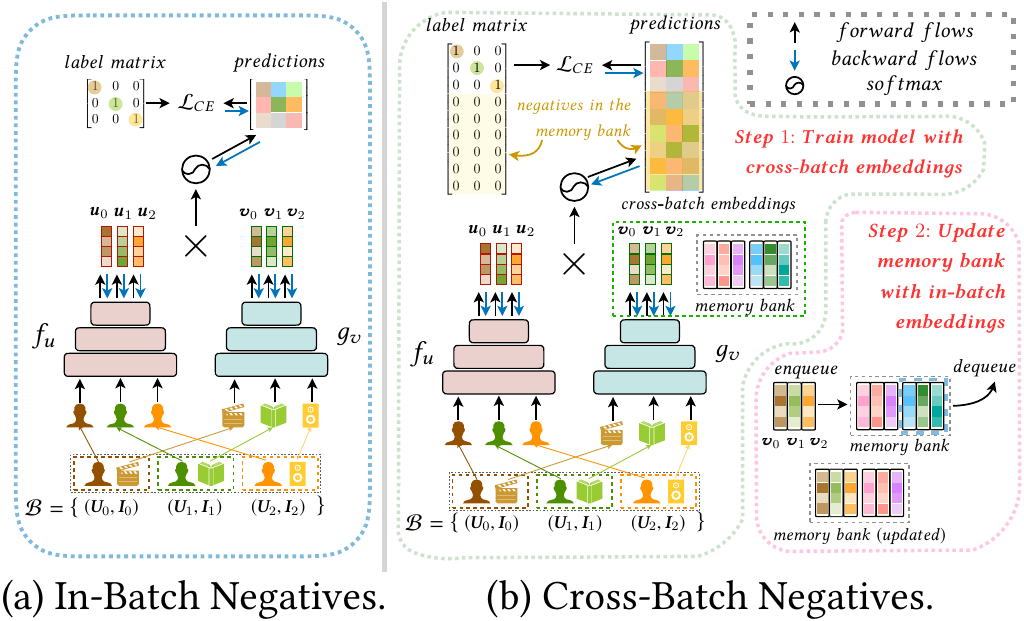}
  \caption{Sampling Strategies for Two-Tower Models}
\label{fig:method}
\end{figure}

Unfortunately, these in-batch strategies are inherently limited by the size of mini-batches. 
It is reasonable that increasing the size of mini-batches benefits negative sampling schemes and can usually boost the performance immediately, while simply enlarging a mini-batch of heavy sample contents subjects to the limited GPU memory. 
A naive solution is that, at each training iteration, we traverse the entire candidate set and collect the encoded dense representations of all candidates in a memory bank, after which we conduct negative sampling and optimize the softmax loss.
Obviously, such a solution is too time-consuming for training two-tower recommenders, but it inspires us to develop a better scheme.

In this paper, we find an interesting characteristic of the neural encoders termed \emph{embedding stability}. 
Concretely, a neural encoder tends to output stable features for the same input after the warm-up stage of the training process. 
It suggests the worth of reusing the item embeddings from recent mini-batches as the negative embeddings for current mini-batch training because the stability guarantees that the reused embeddings approximate the output of current encoder for those reused items if we encode them again. 
Motivated by such phenomenon, we propose a novel sampling approach called \textbf{C}ross-\textbf{B}atch \textbf{N}egative \textbf{S}ampling (\textbf{CBNS}). 
Specifically, we use a first-in-first-out (FIFO) memory bank for caching item embeddings across mini-batches in temporal order. 
The negatives of our CBNS come from two streams, \ie the in-batch negatives and the cached negatives from recent mini-batches in the memory bank. 
In terms of computation of encoding features, CBNS only takes the basic cost to encode the users and items in the current training mini-batch, which is as efficient as the na\"ive in-batch sampling. 
It is worth noting that CBNS effectively decouples the linear relation between the number of negatives and the size of mini-batch, thus we can leverage much more informative negatives to boost model training.
We also provide a theoretical analysis of the effectiveness of CBNS for training two-tower recommenders.

Our contributions can be summarized as follows.
\setlist{nolistsep}
\begin{itemize}[leftmargin=1.5em]
	\item[$\clubsuit$] We show the embedding stability of neural encoders, which sheds light on the motivation of reusing the encoded item embeddings from recent mini-batches.
	\item[$\clubsuit$] We propose a simple yet effective sampling approach termed \textbf{C}ross-\textbf{B}atch \textbf{N}egative \textbf{S}ampling (\textbf{CBNS}), which helps to leverage more informative negatives in training two-tower recommenders and thus boosts performance.
	\item[$\clubsuit$] We provide both theoretical analysis and empirical evaluations to demonstrate the effectiveness and the efficiency of CBNS.
\end{itemize}
\section{Related Work}
\paragraph{Two-Tower Models}
The two-tower architecture is a general framework of a query encoder along with a candidate encoder, which has been widely studied in language retrieval tasks such as question answering~\cite{denseQA}, entity retrieval~\cite{gillick2019learning} and multilingual retrieval~\cite{yang2020multilingual}.
Recently, the two-tower architecture has been adopted into large-scale recommendation~\cite{youtubeDNN,MIND,GRU4REC,comirec} and is becoming a fashion in content-aware scenarios~\cite{xu2018graphcar,ge2020graph}. 
Specially, the two-tower models in recommendation are usually applied on a much larger corpus than in language retrieval tasks, which leads to the challenge of efficient training. 

\paragraph{Negative Sampling for Two-Tower Models}
Training a retrieval model is usually reformulated as an extreme classification task, where lots of sampling-based~\cite{IS,AIS,mikolov2013distributed,blanc2018adaptive,youtubeDNN} methods have been proposed to improve the training efficiency. 
In general, these methods globally (\ie among the whole training set) sample negatives from some pre-defined distributions (\eg uniform, log-uniform or unigram distributions).
However, global sampling strategies are burdensome in training the two-tower models with content features, because: \textbf{i}) we have to make space for negatives apart from the positive pairs of users and items; \textbf{ii}) the intensive encoding becomes the bottleneck to training.
As a result, when incorporating input features into two-tower models, sharing or reusing items as negatives in a mini-batch is a typical pattern for efficient training~\cite{chen2017sampling}. 
In this direction, there are recent in-batch mechanisms for selecting hard negatives~\cite{gillick2019learning, wu2020scalable, denseQA, xiong2017end}, correcting sampling bias under an online training setting~\cite{corr-sfx} or making a mixture with global uniform sampling~\cite{MNS}. 
Note that in-batch strategies are still not satisfactory: due to the pair-wise relation of sampled users and items, the maximum number of informative negatives is essentially bounded by the batch size while enlarging the batch suffers from training inefficiency just as in global sampling. 
In contrast, we find it feasible to reuse the embedded items from the previous mini-batches
and propose a simple, efficient and effective cross-batch negative sampling scheme\footnote{A recent study in dense passage retrieval, RocketQA~\cite{RocketQA}, also introduces ``\emph{cross-batch negatives}'' that may cause name confusion with our CBNS. Different from CBNS that \emph{temporally} shares negatives across training iterations, RocketQA is trained under a distributed setting, where the negatives are only \emph{spatially} shared across the GPUs.}, which improves the aforementioned drawbacks in a low-cost way. 
We note a parallel and complementary study~\cite{zhou2020contrastive} with CBNS that focuses on reducing sampling bias in large-scale recommendation from a view of contrastive learning~\cite{moco}. It also verifies the effectiveness of memory queue in training recommenders, while we further analyse the embedding stability to justify the CBNS. 
Our work is orthogonal to existing studies of sophisticated hard negative mining~\cite{SRNS,ance} or debiasing~\cite{corr-sfx}, which can also benefit from our scheme with more negatives. 
We leave the explorations for the future work.
\section{Modeling Framework}
\subsection{Problem Formulation}
We consider the common setup for large-scale and content-aware recommendation tasks. We have two sets of user information $\calU=\dak{\bmU_i}_i^{N_\calU}$ and item information $\calI=\dak{\bmI_j}_i^{N_\calI}$ respectively, where $\bmU_i\in\calU$ amd $\bmI_j\in\calI$ are sets of pre-processed vectors of features (\eg IDs, logs and types). In a user-centric scenario, given a user with features, the goal is to retrieve a subset of items of interests. Typically, we implement it by setting up two encoders (\ie ``tower'') $f_u:\bmU\mapsto\bbR^d,\ g_v:\bmI\mapsto\bbR^d$ for users and items respectively, after which we estimate the relevance of user-item pairs by a scoring function, namely, $s(\bmU,\bmI)=f_u(\bmU)^\top g_v(\bmI)\triangleq\bmu^\top\bmv$, where $\bmu$ and $\bmv$ denote the encoded embeddings for user and item from $f_u$ and $g_v$.

\subsection{Basic Approaches}
Typically, the large-scale retrieval is treated as an extreme classification problem with a (uniformly) sampled softmax, \ie
\begin{equation}
\label{equ:uniform}
p(\bmI\mid\bmU;\bmTheta)=\frac{e^{\bmu^\top\bmv}}{e^{\bmu^\top\bmv}+\sum_{\bmI^-\in\calN}e^{\bmu^\top\bmv^-}},
\end{equation}
where $\bmTheta$ denotes the parameters of the model, $\calN$ is the sampled negative set and the superscript ``$^-$'' denotes the negatives. The models are trained with cross-entropy loss (equivalent to log-likelihood):
\begin{equation}
\calL_\text{CE}=-\frac1{|\calB|}\sum_{i\in[|\calB|]}\log p(\bmI_i\mid\bmU_i;\bmTheta).
\end{equation}

To improve the training efficiency of two-tower models, a commonly used sampling strategy is the in-batch negative sampling, as shown in Figure~\ref{fig:method}(a). Concretely, it treats other items in the same batch as negatives, and the negative distribution $q$ follows the unigram distribution based on item frequency. According to sampled softmax mechanism~\cite{IS,AIS}, we modify equation~(\ref{equ:uniform}) as
\begin{gather}
    p_\text{In-batch}(\bmI\mid\bmU;\bmTheta)=\frac{e^{s'(\bmU, \bmI; q)}}{e^{s'(\bmU, \bmI; q)}+\sum_{\bmI^-\in\calB\backslash\dak{\bmI}}e^{s'(\bmU, \bmI^-; q)}}, \\
    s'(\bmU, \bmI; q)=s(\bmU, \bmI)-\log q(\bmI)=\bmu^\top\bmv-\log q(\bmI),
\end{gather}
where $\log q(\bmI)$ is a correction to the sampling bias. In-batch negative sampling avoids extra additional negative samples to the item tower and thus saves computation cost. Unfortunately, the number of in-batch items is linearly bounded by the batch size, thus the restricted batch size on GPU limits the performance of models.

\subsection{Cross Batch Negative Sampling}

\subsubsection{Embedding Stability of Neural Model}
As the encoder keeps updating in the training, the item embeddings from past mini-batches are usually considered out-of-date and discarded. Nevertheless, we identify that such information can be reused as valid negatives in the current mini-batch, because of the \emph{embedding stability of neural model}. We investigate this phenomenon by estimating the feature drift~\cite{xbm} of the item encoder $g_v$, namely,
\begin{equation}
    D(\calI,t;\Delta t)\triangleq\sum_{\bmI\in\calI}\shuk{g_v(\bmI;\bmtheta_g^t)-g_v(\bmI;\bmtheta_g^{t-\Delta t})}_2,
\end{equation}
where $\bmtheta_g$ is the parameters of $g_v$, $t$ and $\Delta t$ denote the number and the interval of training iteration (\ie mini-batch). We train a Youtube DNN~\cite{youtubeDNN} from scratch with an in-batch negative softmax loss and compute the feature drift with different intervals in $\dak{1,5,10}$. As shown in Figure~\ref{fig:slow_drift}, the features violently change at the early stage. As the learning rate decreases, the features become relatively stable at about $4\times10^4$ iterations, making it reasonable to reuse them as valid negatives. We term such phenomenon as ``\emph{embedding stability}''. We further show in Lemma~\ref{xbm_lamma} that the embedding stability provides an upper bound for the error of gradients of the scoring function, so the stable embeddings can provide valid information for training.

\begin{lemma}
\label{xbm_lamma}
Suppose that $\shuk{\hat{\bmv}_j-\bmv_j}_2^2<\epsilon$, the output logit of scoring function is $\hat{o}_{ij}\triangleq\bmu_i^\top\hat{\bmv}_j$ and the user encoder $f_u$ satisfied Lipschitz continuous condition, then the deviation of gradient w.r.t user $\bmu_i$ is:
\begin{equation}
    \shuks{\frac{\partial\hat{o}_{ij}}{\partial\bmtheta}
    -\frac{\partial o_{ij}}{\partial\bmtheta}}_2<C\epsilon,
\end{equation}
where $C$ is the Lipschitz constant.
\end{lemma}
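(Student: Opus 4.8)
The plan is to exploit the bilinearity of the logit in $\bmu_i$ and the item embedding, noting that the gradient in question is taken with respect to the user-tower parameters $\bmtheta$, so that both the fresh embedding $\bmv_j$ and the cached embedding $\hat{\bmv}_j$ (outputs of the separate item tower $g_v$) are constants with respect to $\bmtheta$. First I would apply the chain rule through $\bmu_i=f_u(\bmU_i;\bmtheta)$ to write
\[
\pfrac{o_{ij}}{\bmtheta}=\xiaok{\pfrac{\bmu_i}{\bmtheta}}^\top\bmv_j,\qquad
\pfrac{\hat{o}_{ij}}{\bmtheta}=\xiaok{\pfrac{\bmu_i}{\bmtheta}}^\top\hat{\bmv}_j,
\]
where $\partial\bmu_i/\partial\bmtheta$ is the Jacobian of the user encoder and is the \emph{same} matrix in both expressions.

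Because the Jacobian factor is shared, subtracting the two gradients makes it factor out cleanly,
\[
\pfrac{\hat{o}_{ij}}{\bmtheta}-\pfrac{o_{ij}}{\bmtheta}
=\xiaok{\pfrac{\bmu_i}{\bmtheta}}^\top\xiaok{\hat{\bmv}_j-\bmv_j},
\]
and taking squared Euclidean norms while applying submultiplicativity of the operator norm isolates exactly the embedding-drift term that the hypothesis controls:
\[
\shuks{\pfrac{\hat{o}_{ij}}{\bmtheta}-\pfrac{o_{ij}}{\bmtheta}}_2
\le\shuks{\pfrac{\bmu_i}{\bmtheta}}_{\mathrm{op}}\,\shuk{\hat{\bmv}_j-\bmv_j}_2^2.
\]

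Finally I would invoke the Lipschitz assumption on $f_u$: reading it as Lipschitz continuity of the parameter map $\bmtheta\mapsto f_u(\bmU_i;\bmtheta)$, the Jacobian satisfies $\shuk{\partial\bmu_i/\partial\bmtheta}_{\mathrm{op}}\le L$, whence the squared factor is at most $L^2$. Combining this with the assumed drift bound $\shuk{\hat{\bmv}_j-\bmv_j}_2^2<\epsilon$ yields the claim with $C=L^2$, so that the constant appearing in the statement is understood as the squared Lipschitz constant.

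The step I expect to be the main obstacle is the precise reading of the Lipschitz hypothesis. The inequality needs control of $\partial\bmu_i/\partial\bmtheta$, that is, sensitivity to the \emph{parameters}, whereas ``$f_u$ is Lipschitz continuous'' most naturally refers to sensitivity to the \emph{input} $\bmU$. I would therefore state explicitly that $L$ is the Lipschitz constant of the parameter-to-output map (a standard smoothness assumption for a bounded-weight encoder), and would double-check that taking $\bmtheta$ to be the user-tower parameters — so the item embeddings are genuinely constant, consistent with detaching the cached $\hat{\bmv}_j$ from the computation graph — is what keeps the subtraction exact with no dropped term through $g_v$. The remaining manipulations are just routine norm inequalities.
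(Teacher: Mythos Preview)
Your argument is correct and is essentially the paper's own proof: both apply the chain rule through $\bmu_i$, observe that $\partial o_{ij}/\partial\bmu_i=\bmv_j$ so the difference factors as $(\hat{\bmv}_j-\bmv_j)$ times the shared Jacobian $\partial\bmu_i/\partial\bmtheta$, then bound via submultiplicativity and the Lipschitz control on that Jacobian. Your explicit remarks that $\bmtheta$ must be read as the user-tower parameters (so the item embeddings are constants) and that the Lipschitz hypothesis has to be on the parameter map $\bmtheta\mapsto f_u(\bmU_i;\bmtheta)$ rather than the input map are exactly the clarifications the paper leaves implicit; with them your identification $C=L^2$ matches the paper's use of $C$ as the bound on $\shuks{\partial f_u/\partial\bmtheta}_2$.
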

\begin{proof}
The approximated gradient error can be compueted as:
\begin{align}
    &\shuks{\frac{\partial\hat{o}_{ij}}{\partial\bmtheta}
    -\frac{\partial o_{ij}}{\partial\bmtheta}}_2 =\shuks{\xiaok{\frac{\partial \hat{o}_{ij}}{\partial\bmu_i}-\frac{\partial o_{ij}}{\partial\bmu_i}}\frac{\partial \bmu_i}{\partial\bmtheta}}_2 
    =\shuks{(\hat{\bmv}_j-\bmv_j)\frac{\partial\bmu_i}{\partial\bmtheta}}_2 \\
    \le&\shuks{\hat{\bmv}_j-\bmv_j}_2\shuks{\frac{\partial\bmu_i}{\partial\bmtheta}}_2 
    \le\shuks{\frac{\partial f_u(\bmU_i;\bmtheta^t)}{\partial\bmtheta}}_2\epsilon\le C\epsilon.
\end{align}
Empirically, $C\le1$ holds with the models used in our experiments, thus the gradient error can be controlled by embedding stability.
\end{proof}

\subsubsection{FIFO Memory Bank for Cross Batch Features}
Since the embeddings change relatively violently at the early stage, we warm up the item encoder with na\"ive in-batch negative samplings for $4\times10^{4}$ iterations, which helps the model approximate a local optimal and produce stable embeddings. Then we begin to train recommenders with an FIFO memory bank $\calM=\dak{\xiaok{\bmv_i, q(\bmI_i)}}_{i=1}^M$, where $q(\bmI_i)$ denotes the sampling probability of item $\bmI_i$ under the unigram distribution $q$ and $M$ is the memory size. 
The cross-batch negative sampling (CBNS) with the FIFO memory bank is illustrated in Figure~\ref{fig:method}(b), and the output of softmax of CBNS is formulated as 
\begin{equation}
    p_\text{CBNS}(\bmI\mid\bmU;\bmTheta)=\frac{e^{s'(\bmU, \bmI; q)}}{e^{s'(\bmU, \bmI; q)}+\sum_{\bmI^-\in\calM\cup\calB\backslash\dak{\bmI}}e^{s'(\bmU, \bmI^-; q)}},
\end{equation}
At the end of each iteration, we enqueue the embeddings and the respective sampling probabilities in the current mini-batch and dequeue the earliest ones. Note that our memory bank is updated with embeddings without any additional computation. 
Besides, the size of the memory bank can be relatively large, because it does not require much memory cost for those embeddings.

\begin{figure}[t]
  \centering
  \includegraphics[width=0.5\linewidth]{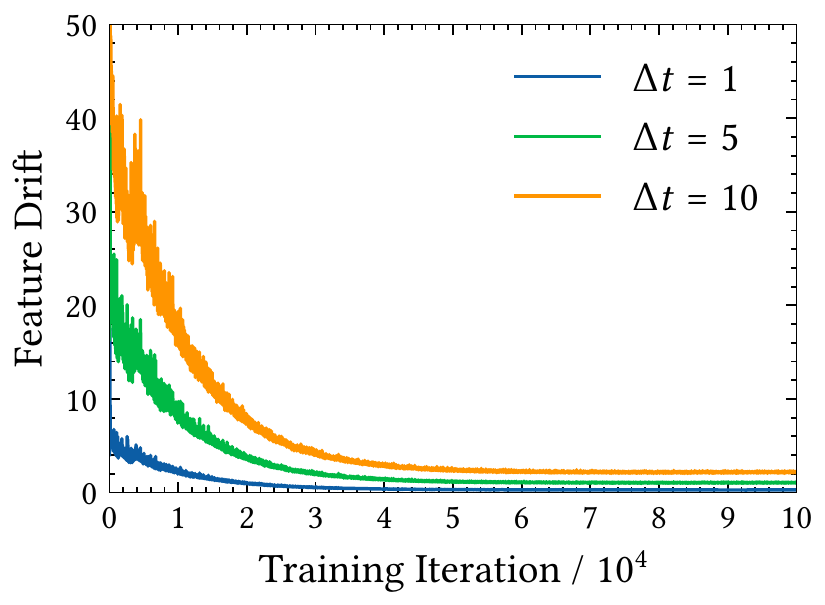}
  \caption{Feature drift of YoutubeDNN~\cite{youtubeDNN} \wrt $\Delta t$s on Amazon-Books dataset~\cite{10.1145/2872427.2883037,10.1145/2766462.2767755}.}
\label{fig:slow_drift}
\end{figure}
\section{Experiments and Results}
In this session, we conduct experiments with 3 typical two-tower models and compare CBNS with 3 representative competitors. 
Besides, we explore the choices of different memory bank sizes.

\subsection{Experimental Setup}
\paragraph{Dataset}
We conduct experiments on a challenging public dataset, Amazon\footnote{http://jmcauley.ucsd.edu/data/amazon/}, which consists of product reviews and metadata from Amazon~\cite{10.1145/2766462.2767755,10.1145/2872427.2883037}. In our experiment, we use the Books category of the Amazon dataset. The Amazon-book dataset is a collection of 8,898,041 user-item interactions between 459,133 users and 313,966 items. To standardize the evaluations, we follow~\cite{liang2018variational,comirec} to split all users into training/validation/test sets by 8:1:1.

\paragraph{Models}
To generalize the conclusions, we conduct experiments on 3 representative two-tower models,
\ie Youtube DNN~\cite{youtubeDNN}, GRU4REC~\cite{GRU4REC} and MIND~\cite{MIND}, with item and user features.

\paragraph{Negative Sampling Strategies}
We compare our CBNS with three representative sampling strategies: \textbf{i}) Uniform Sampling; \textbf{ii}) In-Batch Negative Sampling; \textbf{iii}) Mixed Negative Sampling~\cite{MNS} is a recent work to improve the in-batch sampling by additionally sampling some negatives from the global uniform distribution. 

\paragraph{Metrics}
We report \emph{results of convergence} for all experiments and use the following metrics in our evaluations in terms of performance and training efficiency. 
\textbf{i}) Recall. 
\textbf{ii}) Normalized Discounted Cumulative Gain (NDCG). 
\textbf{iii}) Convergence Time. We report the convergence time in minutes for different sampling strategies with different models.
\textbf{iv}) Average Training Time per 1000 Mini-Batches. To compare the batch-wise training efficiency, we also report the average time in seconds for training every 1000 Mini-Batches.

\paragraph{Parameter Configurations}
The embedding dimension $d$ is set to 64. 
The size of mini-batch is set to 128.  
The coefficient for $\ell_2$ regularization is selected from $\dak{0,10^{-6},10^{-5},10^{-4},10^{-3}}$ via cross-validation.
We use the Adam optimizer~\cite{Adam} with the base $lr=0.001$. 
The patience counter of the early stopping is set to 20. 
The numbers of globally sampled negatives for uniform sampling and MNS are set to 1280 and 1152 (there are 128 in-batch negatives in MNS). The default memory bank size $M$ for CBNS is 2432.

\subsection{Results}

\begin{table}[!t]
\renewcommand\arraystretch{1.01}
\centering
\caption{Model performance with different sampling strategies on the Amazon-Books dataset. Bolded numbers are the best performance of each group. ``Conv. Time'' and ``Avg. Time'' denote the training time till the early stop (in minutes) and the average time per $10^3$ mini-batches (in seconds).}

\resizebox{0.9\columnwidth}{!}{
\begin{tabular}{c|c|c|c|c|c|c|c}
\hline
\multirow{2}{*}{Model} & \multirow{2}{*}{\begin{tabular}[c]{@{}c@{}}Negative\\Sampling\end{tabular}} & \multirow{2}{*}{\begin{tabular}[c]{@{}c@{}}Avg.\\Time\end{tabular}} & \multirow{2}{*}{\begin{tabular}[c]{@{}c@{}}Conv.\\Time\end{tabular}} & \multicolumn{2}{l|}{Metrics@20(\%)} & \multicolumn{2}{l}{Metrics@50(\%)} \\ \cline{5-8} 
 &  &  &  & Recall & NDCG & Recall & NDCG \\ \hline
\multirow{4}{*}{\begin{tabular}[c]{@{}c@{}}YouTube\\DNN\end{tabular}} & Uniform & 26.57 & 135.97 & 4.205 & 7.174 & 7.028 & 11.741 \\ 
 & In-Batch & 16.27 & \phantom{0}62.91 & 4.070 & 6.962 & 6.908 & 11.509 \\ 
 & MNS & 24.73 & 138.50 & 4.113 & 7.050 & 6.929 & 11.535 \\ 
 & {\cellcolor[HTML]{EAFAF1}CBNS} & {\cellcolor[HTML]{EAFAF1}20.38} & {\cellcolor[HTML]{EAFAF1}\phantom{0}62.50} & {\cellcolor[HTML]{EAFAF1}\textbf{4.237}} & {\cellcolor[HTML]{EAFAF1}\textbf{7.325}} & {\cellcolor[HTML]{EAFAF1}\textbf{7.084}} & {\cellcolor[HTML]{EAFAF1}\textbf{11.881}} \\ \hline
\multirow{4}{*}{GRU4REC} & Uniform & 96.08 & 326.68 & 4.237 & 7.360 & 6.895 & 11.554 \\ 
 & In-Batch & 35.09 & \phantom{0}91.81 & 3.912 & 6.674 & 6.425 & 10.724 \\ 
 & MNS & 58.69 & 464.62 & 4.110 & 7.178 & 6.553 & 11.183 \\ 
 & {\cellcolor[HTML]{EAFAF1}CBNS} & {\cellcolor[HTML]{EAFAF1}40.15} & {\cellcolor[HTML]{EAFAF1}167.31} & {\cellcolor[HTML]{EAFAF1}\textbf{4.257}} & {\cellcolor[HTML]{EAFAF1}\textbf{7.479}} & {\cellcolor[HTML]{EAFAF1}\textbf{7.044}} & {\cellcolor[HTML]{EAFAF1}\textbf{12.032}} \\ \hline
\multirow{4}{*}{MIND} & Uniform & 37.51 & 180.66 & 4.137 & 6.881 & 7.008 & 11.400 \\ 
 & In-Batch & 27.87 & 104.50 & 4.140 & 7.009 & 6.894 & 11.363 \\ 
 & MNS & 36.82 & 205.57 & 4.576 & 7.596 & 7.135 & 11.603 \\ 
 & {\cellcolor[HTML]{EAFAF1}CBNS} & {\cellcolor[HTML]{EAFAF1}31.65} & {\cellcolor[HTML]{EAFAF1}146.14} & {\cellcolor[HTML]{EAFAF1}\textbf{4.799}} & {\cellcolor[HTML]{EAFAF1}\textbf{8.105}} & {\cellcolor[HTML]{EAFAF1}\textbf{7.689}} & {\cellcolor[HTML]{EAFAF1}\textbf{12.671}} \\ \hline
\end{tabular}
}
\label{tbl:bookresults}
\end{table}


\paragraph{Effectiveness of CBNS}
We first show the results of the combinations of 3 models and 4 sampling strategies in Table~\ref{tbl:bookresults}. Our CBNS consistently keeps the superior on Recalls and NDCGs over other sampling strategies with all tested models. For CBNS, it takes an acceptably longer average time per 1000 mini-batches than the in-batch sampling to compute similarities of cached embeddings, while it gets \textbf{3.23\%}, \textbf{12.19\%}, \textbf{11.51\%} improvements to in-batch sampling with all tested models on NDCG@50, because of involving more informative negatives in batch training.
We also discover several interesting insights from Table~\ref{tbl:bookresults}: \textbf{1}) Sharing encoded item embeddings in batch or reusing item embeddings across batches can improve the average training time, while it doesn't mean accelerating convergence. The MNS converges much slower than other competitors since it takes more time for models to adapt to the mixed distribution. \textbf{2}) Though in-batch sampling is the most efficient strategy when we keep the same batch size for all strategies, we find it inferior because of the lack of informative negatives. \textbf{3}) Generally, the in-batch sampling and CBNS are unigram distributed, which is more similar to the true negative distribution and makes it compatible for models to serve online. The validation recall and NDCG curves \wrt the training iteration and the wall time respectively are shown in Figure~\ref{fig:res_vis}, which demonstrates the fast convergence and satisfactory performance of CBNS.

\begin{figure}[t]
  \centering
  \includegraphics[width=\columnwidth]{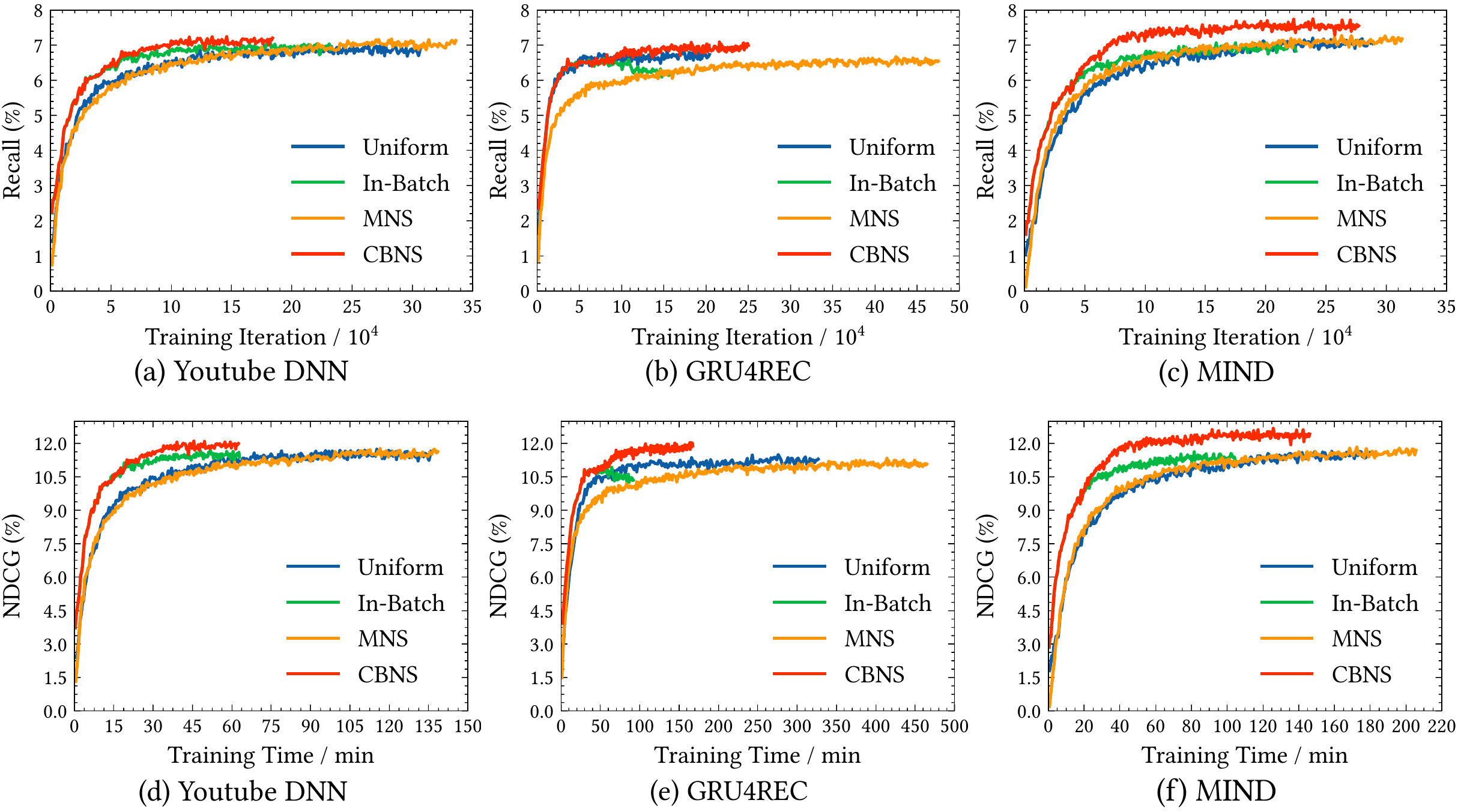}
  \caption{Validation recall curves (a-c) \wrt the training iteration (\ie mini-batch) and validation NDCG curves (d-f) \wrt the training wall time, all under the same batch size.}
\label{fig:res_vis}
\end{figure}

\begin{table}[t]
\renewcommand\arraystretch{1.01}
\centering
\caption{
Metrics@50 (\%) of 3 models trained with CBNS of different size of memory bank. Bolded numbers are thebest performance of each column. All the batch sizes are 128.}
\resizebox{0.9\columnwidth}{!}{
\begin{tabular}{c|c|c|c|c|c|c|c}
\hline
\multirow{2}{*}{$M$} & \multirow{2}{*}{\#neg} & \multicolumn{2}{c|}{Youtube DNN} & \multicolumn{2}{c|}{GRU4REC} & \multicolumn{2}{c}{MIND} \\ \cline{3-8} 
 &  & Recall & NDCG & Recall & NDCG & Recall & NDCG \\ \hline
0 & 128 & 6.908 & 11.509 & 6.425 & 10.724 & 6.894 & 11.363 \\ 
128 & 256 & 7.032 & 11.877 & 6.518 & 10.993 & 7.003 & 11.431 \\
512 & 640 & 7.175 & \textbf{12.085} & 6.679 & 11.229 & 7.102 & 11.66 \\
1152 & 1280 & \textbf{7.182} & 12.055 & 6.845 & 11.681 & 7.463 & 12.309 \\
2432 & 2560 & 7.084 & 11.881 & 7.044 & 12.032 & \textbf{7.689} & \textbf{12.671} \\
4992 & 5120 & 6.988 & 11.807 & \textbf{7.394} & \textbf{12.423} & 7.474 & 12.303 \\
10112 & 10240 & 7.001 & 11.772 & 6.903 & 11.810 & 7.268 & 11.939 \\ \hline
\end{tabular}}
\label{tbl:ablation}
\end{table}

\paragraph{Memory Bank Size $M$ for CBNS}
We further explore CBNS by investigating 
how the model performance changes under different memory bank size $M$ with a fixed mini-batch size, as shown in Table~\ref{tbl:ablation}. 
Different models have different best $M$s. The Youtube DNN performs almost best when $M$ is 512 or 1152, while GRU4REC and MIND perform best under larger $M$s as $4992$ and $2432$, respectively. 
Note that embedding stability only holds within a limited period, thus an overlarge $M$ brings less benefit or adversely hurts models.
\section{Conclusions}

In this paper, we propose a novel Cross-Batch Negative Sampling (CBNS) strategy for efficiently training two-tower recommenders in a content-aware scenario, which breaks the dilemma of deciding between effectiveness and efficiency in existing sampling schemes. 
We find that neural encoders have embedding stability in the training, which enables reusing cross-batch negative item embeddings to boost training. 
Base on such facts, we set up a memory bank for caching the item embeddings from previous iterations and conduct cross-batch negative sampling for training. 
Compared with commonly used sampling strategies across three popular two-tower models, CBNS can efficiently involve more informative negatives, thus boosting performance with a low cost.
Future work includes exploring the combinations of CBNS with negative mining techniques to further improve the training of the two-tower recommenders.

\bibliographystyle{ACM-Reference-Format}

\balance
\bibliography{references}
\end{document}